\documentclass[letterpaper, 10 pt, conference]{ieeeconf}  %

\IEEEoverridecommandlockouts                              %
\usepackage{graphics} %
\usepackage{epsfig} %
\usepackage{mathptmx} %
\usepackage{times} %
\usepackage{amsmath} %
\usepackage{amssymb}  %

\usepackage{cite}
\usepackage{amsmath,amssymb,amsfonts}
\usepackage{bm}
\usepackage{algorithmic}
\usepackage{graphicx}
\usepackage{textcomp}
\usepackage{xcolor}
\usepackage{amsthm}
\usepackage[ruled,vlined]{algorithm2e}
\usepackage{comment}
\usepackage{caption}
\usepackage{subcaption}
\usepackage{array}
\usepackage{multirow}
\usepackage{lipsum}
\usepackage{hhline}
\usepackage{adjustbox}
\usepackage{cuted}
\let\labelindent\relax
\usepackage{enumitem}

\newtheorem{theorem}{Theorem}[section]
\newtheorem{corollary}{Corollary}[theorem]
\newtheorem{lemma}[theorem]{Lemma}
\newtheorem{assumption}{Assumption}[section]
\newtheorem{remark}{Remark}[section]

\def\BibTeX{{\rm B\kern-.05em{\sc i\kern-.025em b}\kern-.08em
    T\kern-.1667em\lower.7ex\hbox{E}\kern-.125emX}}

\title{\LARGE \bf
Belief-Driven Search in Connectivity-Constrained Mobile Sensor Networks with Convergence Guarantees
}

\author{George J. Ziavras, Ziqiao Zhang, Shreyas Sundaram and Christopher G. Brinton%
\thanks{G. Ziavras, Z. Zhang, S. Sundaram, and C. Brinton are with
the Elmore Family School of Electrical and Computer Engineering, Purdue University, West
Lafayette, IN 47906, USA. Email: \{gziavras, zhan5780, sundara2, cgb\}@purdue.edu.}%
\thanks{This work was supported by Saab Inc. and Saab AB.}%
}%

\begin{document}

\maketitle
\thispagestyle{empty}
\pagestyle{empty}

\begin{abstract}
Multi-agent search and coverage is critical to autonomous exploration, search-and-rescue, and defense applications. This paper investigates settings where a mobile sensor network is tasked with coordinating to discover the position of a stationary target within a bounded region. We develop a Bayesian update scheme where agents iteratively combine local uncertainties derived from their sensor measurements through inverse variance weighting to get a better global understanding of where the target may be located given some prior belief. Network agents' trajectories are governed by a primal-dual gradient update rule that simultaneously minimizes the expected sensing uncertainty and enforces network connectivity among the agents. We prove that the belief converges to the target under mild, planner-agnostic conditions, with an explicit exponential rate of false-hypothesis rejection and a bound on expected search time. Simulation results show that we can successfully find the target and collapse belief to the ground truth target location while ensuring connectivity of our sensor network. They also empirically confirm our theoretical bounds and revisiting assumption. Finally, we compare against a lawnmower sweep, showing that our method locates the target more quickly when the prior is informative.  
\end{abstract}

\section{INTRODUCTION}

Autonomous multi-agent systems are increasingly vital for operations in high-stakes, uncertain environments, ranging from autonomous scientific exploration and environmental monitoring to search-and-rescue and defense applications. A common objective in these settings is to locate a target of interest whose position within some region is uncertain. We build on related coverage control to establish an active search algorithm that performs this task. Coverage control often seeks to distribute stationary agents based on a fixed importance map to optimize some cost or objective. In contrast, active search benefits from agents actively learning, adjusting their beliefs of the environment based on their sensor feedback, to approach a target's localization. Fig.~\ref{fig:overview} illustrates one such use case. %
\begin{figure}[t]
    \centering
    \includegraphics[width = 0.48\textwidth]{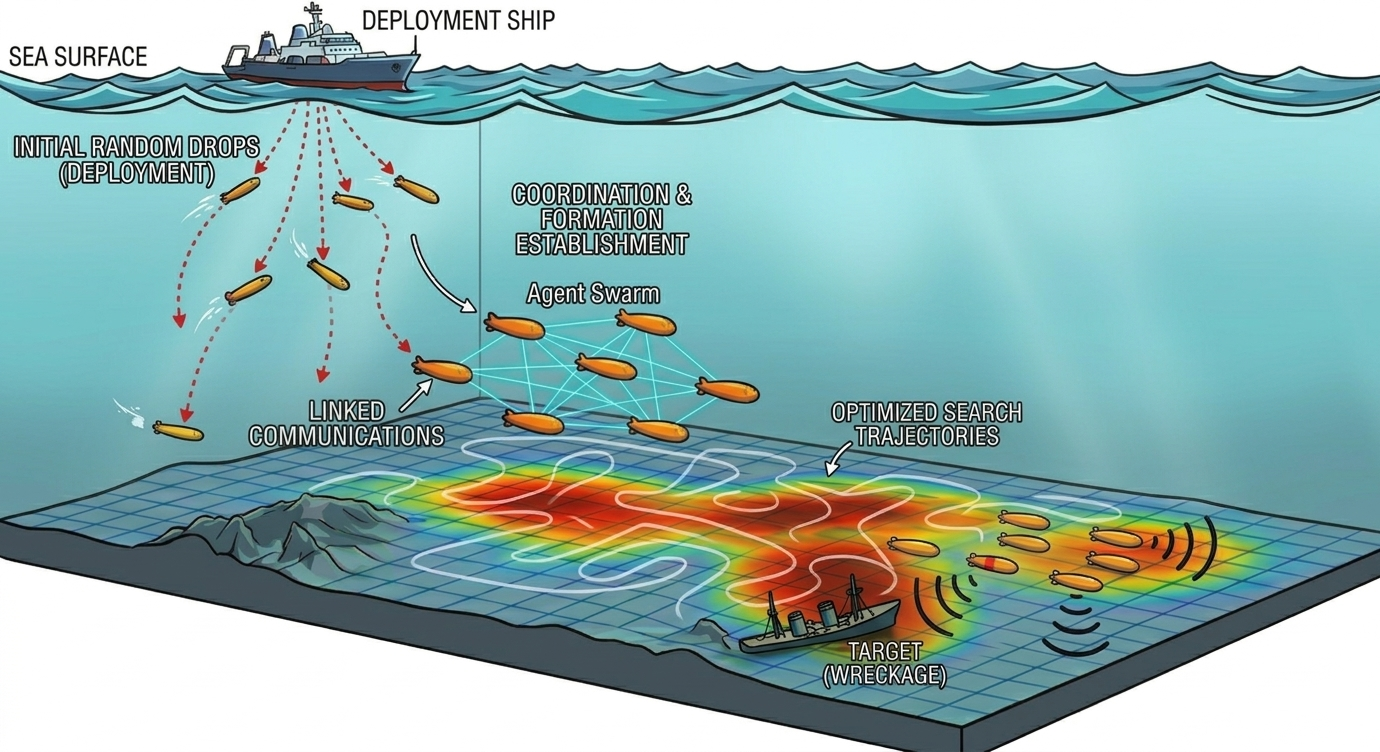}
    \caption{A visual overview of a possible use-case where a coordinated Unmanned Underwater Vehicle swarm performs an optimized search to locate a target wreckage site based on some initial belief projected across the ocean floor (graphic developed with the help of Google's Gemini).}
    \label{fig:overview}
\end{figure}
 
\noindent\textbf{Key Challenges.} Multi-robot search under communication constraints poses two core difficulties:

\textit{1. Information gain versus network connectivity.} To maximize the rate of information gain, uncertainty minimization incentivizes agents to spread out and cover unvisited regions. Fusing measurements and maintaining a shared belief, however, requires a connected communication network. %
\textit{2. The need for tractable, differentiable constraints.} We seek to incorporate connectivity constraints as a functional requirement for information exchange. We express connectivity as differentiable constraints, so that a primal-dual gradient method can update agent trajectories continuously while preserving connectivity. %

\noindent\textbf{Main Contributions.} Our contributions are as follows.
\begin{itemize}[leftmargin=*]
\item \textbf{Inverse-Variance Sensor Fusion:} We develop a sensing model that replaces Voronoi partitioning with a smoother global measurement uncertainty, enabling more flexible gradient-based control. This uncertainty also defines detection probabilities given some fixed false-alarm rate, driving an exact Bayesian update of the belief.
\item \textbf{Connectivity-Constrained Planning:} We encode network connectivity as a differentiable constraint, solving the resulting planning problem with a primal-dual gradient method, showing empirically in simulation that a team starting connected remains connected. 
\item \textbf{Convergence Analysis:} We prove that, for any planner that does not anticipate future measurements, the probability of ever confirming a wrong location is explicitly bounded. We introduce \emph{Aggregated Global Identifiability} (AGI), a weaker revisiting condition than full coverage (cells with non-negligible belief, or the target cell, must be sensed within each finite window). Under this condition, the belief concentrates on the target. Belief in false locations decays exponentially at a rate determined by the sensor's false-alarm probabilities and missed-detection threshold, and we bound the expected search time. 
\end{itemize}

\section{RELATED WORKS}
Distributed search \& coverage control has traditionally used Voronoi partitions to tessellate the search space \cite{agerman2025primal,schwager2009decentralized,pratissoli2025distributed, guruprasad2011automated}. Density-Driven Optimal Control (D2OC) instead applies optimal transport to match agent distributions to non-uniform priority maps \cite{SeoL2025}, but assumes these maps are static or prescribed. AI-augmented extensions update them under uncertain priors \cite{lee2026ai} at the cost of added computation and training. Effective coverage control \cite{hussein2007effective} guarantees that every point is sensed above a capacity threshold, using a strict maximum sensing radius and a measurement-driven penalty over a fixed underlying belief distribution. None of these models the cumulative information gain from sensor overlap or how that gain alters the localization belief. 

We fuse measurements from connected sensors into a global uncertainty estimate and use Bayesian updates to model how sensing changes the belief. Close to this, \cite{prabhakar2025decoupled} applies inverse-covariance fusion for decentralized state and target estimation. We trade fully decentralized covariance fusion in favor of a simpler differentiable objective built on a distance-coupled local uncertainty function. This physics-informed model sums inverse-variances (i.e. Fisher information) across overlapping sensors, as is standard for independent Gaussian measurements \cite{battistelli2014kullback}, an assumption we make for our sensing model. Weighted variants of this method can be used to accommodate heterogeneous sensors, as in \cite{xiu2024multisensor}. Bayesian belief updates also drive information-theoretic search \cite{julian2012distributed,hu2013vision}. Unlike these, we model sensor-overlap gain in closed form and enforce multi-agent connectivity.

Finally, online sampling-based informative path planners like IA-TIGRIS build candidate trajectory trees through informed sampling and select trajectories by scoring metrics like cumulative information gain \cite{moon2025iatigris}. They offer strong global exploration in non-convex environments and resist local minima, but scale poorly to joint multi-agent, constraint-aware state spaces. Our gradient-based planner avoids this but can stall at saddle points and in flat, low-belief regions, which we mitigate with perturbed gradients
and an adaptively varied gradient pull region (Algorithm \ref{alg:ppala}).

\section{PROBLEM FORMULATION}

\subsection{Search Space and Sensing Model}
We consider $n$ sensors (agents) in a convex, compact search space $X \subset \mathbb{R}^d$, and its discretized mesh $X_h$ of spacing $h$ with $c = |X_h|$ cells. A single stationary target occupies an unknown cell $p\in X_h$. Time is discrete, $t\in\{0,1,2,...\}$, and $x(t) = [x_1(t)^\intercal, x_2(t)^\intercal,...,x_n(t)^\intercal]^\intercal\in ({X})^n$ collects the sensor positions, with $x_i(t) \in X$ the position of sensor $i$. Sensing uncertainty is modeled by a continuously differentiable, non-decreasing function $f_{i}: \mathbb{R}_+ \to \mathbb{R}_+$, where $f_{i}(||q-x_i(t)||)$ is the variance of sensor $i$'s measurement at cell $q$ and $\|\cdot\|$ is the Euclidean norm. Individual measurement noise is independent across sensors, cells, and time.

The belief $\phi(q,t)$ is a probability mass function on $X_h$ giving the probability that the target is at $q$ at time $t$, starting from a prior $\phi(\cdot, 0)$. Our goal is to update $\phi$ and the trajectories $x_i(t)$ jointly so that $\phi(\cdot, t)$ converges to the point mass $\phi^*(q) = \mathbf{1}\{q = p\}$ at the target.                       

\subsection{Global Uncertainty and Search Objective}
\subsubsection{Inverse Variance Weighting}
Unlike coverage control that relies on Voronoi partitions to assign each cell $q$ to its nearest sensor, we fuse all sensors' measurements under $f(q, x(t))$. Specifically, we employ Inverse Variance Weighting \cite{xiu2024multisensor} (under our assumption of each agent being equipped with identical sensors) on the individual sensor variances to obtain the combined unbiased minimum-variance,
\begin{equation}
    \label{eq:fqx}
    f(q,x(t)) = \left(\sum_{i = 1}^n \frac{1}{f_{i}(||q - x_i(t)||)}\right)^{-1}.
\end{equation}
Since $f \leq \min_i f_i$, a group of agents measures every cell at least as well as its best member. The search objective is the expected fused uncertainty under the current belief,
\begin{equation}
\label{eq:H_cost}
    \mathcal{H}_t(x) = \sum_{q\in X_h}f(q,x)\phi(q,t).
\end{equation}

\subsubsection{Network Connectivity}
Our other key objective is to maintain network connectivity, for which we employ results in~\cite{zavlanos2007potential,agerman2025primal, siligardi2019robust}. Letting $L(\mathcal{G}(t)) = \Delta(x) - A(x)$ be the weighted Laplacian of the communication graph, where the weights $a_{ij}(x)$ of the adjacency matrix $A(x)$ are a sigmoid $a_{ij}(x)=\big(1+e^{-s\,(r_c-\|x_i-x_j\|)}\big)^{-1}$ with steepness $s$, which decays towards zero beyond the communication range $r_c$, and $\Delta(x)$ is the diagonal degree matrix. Let $P \in \mathbb{R}^{n \times (n-1)}$ have orthonormal columns orthogonal to $\mathbf{1}_n$. Projecting onto this subspace removes the Laplacian's zero eigenvalue, so $\det(P^TL(x)P) = \lambda_2\cdots\lambda_n$, which by the matrix-tree theorem equals $n$ times the weighted number of spanning trees. It is therefore positive if and only if the graph is connected, so we impose
\begin{equation}
    g_{conn}(x(t)) = \zeta - \det(P^T L(\mathcal{G}(t))P)
\end{equation}
where $\zeta > 0$ sets the required margin. The reason we utilize this constraint over other common methods such as maintaining $\lambda_2$ above some threshold is due to its smoothness and ability to be differentiable in all graph configurations. Furthermore, counting spanning trees measures overall connectivity, not just the weakest cut, which rewards redundant paths and maintains a stronger network. The flip side is that a single weakening link in a large group changes $g_{conn}$ only slightly, which the safeguards in Algorithm \ref{alg:ppala} address.  

\subsubsection{Collision Avoidance}
To avoid collisions, every pair of sensors keeps a minimum separation $d_c>0$
\begin{equation}
    g_{ij}(x(t)) = d_c - ||x_i(t) - x_j(t)|| \leq 0, \ \ i\neq j.
\end{equation}

\subsection{Measurement and Belief Update}
At every timestep the team takes a fused measurement, where each cell $q$ of a looked set $L_t\subseteq X_h$ returns a binary report $y_t(q)\in \{0, 1\}$ ($1 = $ detection). The looked set determines how a cell's belief will be updated by Bayes' rule~\eqref{eq:phi}. We define the set we use for $L_t$ below. 

We model the detector as a Neyman-Pearson test for a known signal in white Gaussian noise with false-alarm probability $P_{fa}$ and fused $SNR_{linear} = \gamma_s/f(q,x)$, where $\gamma_s > 0$ is a signal-strength scale. Under these assumptions, its miss and detection probabilities are respectively:
\begin{align}
    P_e(SNR_{linear}^{-1}) &= \mathbf{Q}(\sqrt{SNR_{linear}} - \mathbf{Q}^{-1}(P_{fa})), \label{eq:Pe}\\
    P_d(SNR_{linear}^{-1}) &= 1 - P_e(SNR_{linear}^{-1}),
\end{align}
 where $\mathbf{Q}(\cdot)$ is the Gaussian tail function. Since $\mathbf{Q}$ is decreasing, $P_d > P_{fa}$ and far from every sensor, $P_d \to P_{fa}$ modeling how such reports carry no information. A cell $q\in X_h$ is \emph{sensed} at $t$ if $q\in X_{h,t,active}=\{q:P_e(q,x(t))\le\delta\}\subseteq L_t$, i.e.\ it falls in the region where $P_d\ge1-\delta$. Our work sets $L_t = X_{h,t,active}$. The target cell $p$ reports $1$ with probability $P_d(p, x(t))$, and every other cell with probability $P_{fa}$, independently across cells. By Bayes' rule: 
\begin{equation}\label{eq:phi}
\begin{split}
  \phi(q,t+1) &= \frac{\phi(q,t)\,\ell_t(q)}{\sum_r\phi(r,t)\ell_t(r)},\quad \\
  \ell_t(q) &= \begin{cases}
    P_d(q,x(t))/P_{fa}, & q\in L_t,\ y_t(q)=1,\\ 
    P_e(q,x(t))/(1-P_{fa}), & q\in L_t,\ y_t(q)=0,\\ 
    1, & q\notin L_t.
  \end{cases}
\end{split}
\end{equation}
A detection raises a cell's belief, a miss lowers it, and unlooked cells change only through normalization. Updates of this form appear in \cite{mitra2020new, julian2012distributed, hu2013vision, castanon1995optimal}.

\subsection{Optimization Problem}
At each time $t$, the search problem is
\begin{equation}\label{eq:opt}
\begin{aligned}
    \min_{x \in (X)^n} \quad & \mathcal{H}_t(x) \\
    \text{s.t.} \quad & g_{conn}(x) \leq 0 ,
    & g_{ij}(x) \leq 0, \quad \forall \, i\neq j .
\end{aligned}
\end{equation}
We turn to solving this problem in the next section.

\section{ALGORITHM METHODOLOGY}
\label{sec:alg}
In this section, we detail our optimization algorithm used to solve~\eqref{eq:opt}. To handle the non-convex nature of the objective function $\mathcal{H}(x(t))$ and the complex connectivity constraint $g_{conn}(x(t))$, we adopt the Primal-Dual Proximal Augmented Lagrangian Algorithm (PPALA) developed in \cite{Kim2024FastSingleLoop,agerman2025primal}. We drop the index $t$ when the context is clear. Unlike the static setting of~\cite{agerman2025primal}, where PPALA converges to a KKT point of fixed sensor positions, our objective $\mathcal{H}_t$ changes at every step as the belief is updated, so we do not claim convergence of the positions. Instead, Section~\ref{sec:conv} shows that the belief concentrates on the target for any planner that satisfies \textbf{AGI}.

\subsection{The Augmented Lagrangian}
The augmented Lagrangian $\mathcal{L}_\rho(x,u_G, \lambda_G)$ takes the form
\begin{equation} \label{eq:augLag}
    \mathcal{L}_\rho(x, u_G, \lambda_G) = J(x) +  w\big(\lambda_G(g_G(x) + u_G) + \frac{\rho}{2}(g_G(x) + u_G)^2\big),
\end{equation}
where we have $g_G(x) = g_{conn}(x)$ (our connectivity constraint), a slack $u_G\geq 0$ turning $g_G\leq 0$ into the equality $g_G + u_G = 0$, $\lambda_G \geq 0$ is the Lagrange multiplier, $\rho > 0$ is the penalty parameter, and $w>0$ is a weight balancing search against connectivity. Here $J(x) = \sum_{q\in R_G}f(q,x)\phi(q,t)$ is $H_t$ restricted to the pull region $R_G\subseteq X_h$ detailed in Algorithm \ref{alg:ppala}. Collisions are enforced directly, by projecting onto the minimum separation $d_c$ after each step. PPALA regularizes the dual variable toward a proximal center $\mu_G$. The multiplier is $\lambda_G=\arg\max_{\lambda}\{\lambda(g_G+u_G)-\tfrac{1}{2\rho}(\lambda-\mu_G)^2\}=\mu_G+\rho(g_G+u_G)$, and $\mu_G$ tracks $\lambda_G$ as a moving average with rate $\sigma_k\in(0,1)$. Anchoring $\lambda_G$ to the slowly moving $\mu_G$ damps oscillation between search and connectivity \cite{Kim2024FastSingleLoop}.
Algorithm \ref{alg:ppala} lists the full procedure composed of a shared belief update, followed by a normalized primal step with noise and a connectivity safeguard, and a dual update.

\begin{algorithm*}[!t]
    \small
    \caption{Sensor Movement and Belief Update Algorithm based on PPALA}
    \label{alg:ppala}
    \begin{algorithmic}[1]
    \STATE \textbf{Input:} Initial positions $x(0)$ forming one connected group, prior $\phi(\cdot,0)$, and $\lambda_G=\mu_G=u_G=0$.
    Sensing: $P_{fa}$, $\gamma_s$, $\delta$. Connectivity: $r_c$, $s$, $\zeta$. Pull region: $r_s$, $\gamma$, $\varepsilon_g$. Motion: $d_{\min}$, $d_{\max}$, $d_c$, $\sigma$. PPALA: $w$, $\rho$, $\tau$, $\sigma_k$. Horizon: $T_{\max}$.
    \FOR{$t = 0, 1, \dots, T_{\max}$}
        \STATE \textbf{Groups:} Agents within range $r_c$ form connected groups $G$. Connected agents share $x_i(t)$, their beliefs and $(\lambda_G,u_G,\mu_G)$. 
        \STATE \textbf{Sensing and belief:} $G$ senses $L_t = X_{h,t,active}=\{q\in X_h:P_e(q,x(t))\le\delta\}$.
        Each $q$ returns $y_t(q)\in\{0,1\}$ and $\phi$ is updated by~\eqref{eq:phi}.
        \STATE \textbf{Pull region:} $R_G=X_{h,t,active}$ if it contains $q$ s.t. $\phi(q,t+1)>\varepsilon_g\max_{q'}\phi(q',t+1)$. Otherwise $R_G$ adds all cells within distance $\gamma^kr_s$ of $G$, for the smallest $k\in\mathbb{N}$ that includes such a cell. ($R_G$ affects only the gradient, not the belief.)
        \STATE \textbf{Primal update:} With $J(x)=\sum_{q\in R_G}f(q,x)\,\phi(q,t+1)$ and $g_G(x)=\zeta-\det(P^\intercal L_G(x)P)$, where $L_G$ is the weighted Laplacian of $G$ and $P$ has orthonormal columns orthogonal to $\mathbf{1}$,
        \begin{gather*}
        \nabla_{x_i}\mathcal{L}_\rho = \nabla_{x_i}J + w\big(\lambda_G + \rho(g_G + u_G)\big)\nabla_{x_i}g_G,
        \qquad
        \eta_{x,i} = \mathrm{clip}\Big(d_{\max}\tfrac{\lVert\nabla_{x_i}\mathcal{L}_\rho\rVert}{\max_{j\in G}\lVert\nabla_{x_j}\mathcal{L}_\rho\rVert},\,d_{\min},\,d_{\max}\Big),\\
        x_i(t+1) = x_i(t) - \eta_{x,i}\,\frac{\hat{\nabla}_{x_i}\mathcal{L}_\rho + \xi_i}{\lVert\hat{\nabla}_{x_i}\mathcal{L}_\rho + \xi_i\rVert},
        \qquad \hat{\nabla}_{x_i}\mathcal{L}_\rho=\frac{\nabla_{x_i}\mathcal{L}_\rho}{\lVert\nabla_{x_i}\mathcal{L}_\rho\rVert},\quad \xi_i\sim\mathcal{N}(0,\sigma^2I),
        \end{gather*}
        then enforce separation $d_c$ and project onto $X$. \textbf{Safeguard:} if $x(t+1)$ would disconnect $G$, recompute it with $w\leftarrow2^mw$, $m=1,2,\dots$, until $G$ stays connected or $x(t+1)$ stops changing; then halve all steps in $G$ (up to 10 times); then hold position. Hence a group that starts connected stays connected.
        \STATE \textbf{Dual update:} With $g_G=g_G(x(t+1))$:\quad
        $u_G \leftarrow \max\{0,\,u_G - \tau(\lambda_G + \rho(g_G + u_G))\}$,\quad
        $\mu_G \leftarrow \mu_G + \sigma_k(\lambda_G - \mu_G)$,\quad
        $\lambda_G \leftarrow \mu_G + \rho(g_G + u_G)$.
    \ENDFOR
    \end{algorithmic}
\end{algorithm*}

\section{ANALYSIS}
\label{sec:conv}
The primal step of Algorithm~\ref{alg:ppala} moves each agent toward the belief in $R_G(t)$, weighted by its share of the fused sensing at each cell. On this nonconvex, time-varying objective, the descent can \emph{(i)} stall at critical points where the weighted belief balances about an agent, \emph{(ii)} lose direction once nearby belief is depleted, while \emph{(iii)} the connectivity metric reacts slowly to a single weakening link in a large group. Algorithm~\ref{alg:ppala} mitigates these with gradient noise, a pull region that widens when nearby belief is depleted, and a step-size safeguard with a connectivity radius below the true communication range. These measures make recurrent sensing plausible but not provable.
We therefore treat the planner as a black box, state our results under explicit assumptions on which cells it senses (Assumption~\ref{as:AGI}), and verify these assumptions empirically in Section~\ref{sec:sim}. 

\subsection{Gradient Analysis}
We adopt the quadratic sensing cost $f_{local}(\|q-x_i(t)\|)=\tfrac12\|q-x_i(t)\|^2+\epsilon$, standard in
coverage control \cite{schwager2006distributed,schwager2009decentralized}, where $\epsilon>0$ enforces
$f_{local}>0$. With $f(q,x(t))=\big[A_i+1/f_{local}(\|q-x_i(t)\|)\big]^{-1}$ and
$A_i=\sum_{j\neq i}1/f_{local}(\|q-x_j(t)\|)$, differentiating gives
\begin{equation}\label{eq:df_x}
  \nabla_{x_i(t)}f(q,x(t))=\frac{4\,(x_i(t)-q)}{\big(A_i(\|q-x_i(t)\|^2+2\epsilon)+2\big)^2}=w_q\,(x_i(t)-q),
\end{equation}
where $w_q=(f/f_{local})^2\in(0,1]$. $w_q$ is the square of the ratio describing the sensor fusion contribution compared to just agent $i$. It is close to $1$ when agent $i$ dominates sensing at $q$.

\subsection{Trajectory Behavior and Suboptimal Solutions}
During a step, we hold the set of the pull region fixed and evaluate on mesh $X_h$ of size $h$, with $\phi(\cdot,t)$ a probability mass function on $X_h$. By~\eqref{eq:df_x} the descent direction is
\begin{equation}\label{eq:df_x2}
  -\!\!\sum_{q\in R_G(t)}\!\!\phi(q,t)\,\nabla_{x_i(t)}f(q,x(t))=\sum_{q\in R_G(t)}\!\!w_q\,\phi(q,t)\,(q-x_i(t)).
\end{equation}
This is a vector sum from $x_i(t)$ toward each cell in $q$, weighted by their belief ($\phi$) and by agent $i$'s responsibility for them ($w_q$), where $R_G(t)$ is the pull region described in Algorithm \ref{alg:ppala}. For a lone agent ($w_q\equiv1$) it points at the centroid of belief on $R_G(t)$. %

Three suboptimal behaviors follow. \emph{(i) Critical points.} \eqref{eq:df_x2} vanishes when the weighted mass $\phi\,w_q$ balances about
$x_i(t)$, and saddles points are also possible with multiple agents. %
Noise may help here, although
perturbed-gradient guarantees \cite{jin2017escape} assume smoothness, which we cannot assume while $R_G(t)$ changes between steps. These call for an explicit escape mechanism. (ii) Dead regions.
When sensing depletes nearby belief, the gradient carries no information about belief outside $R_G(t)$. Switching to a gradient over all of $X_h$ would cause other stall issues, since
by~\eqref{eq:df_x2} a lone agent is drawn to the centroid of belief, which may be empty. In Algorithm \ref{alg:ppala}, we exploit the discontinuous property by varying radii for $R_G(t)$ when stuck, which showed empirical promise in treating \emph{(i)} and \emph{(ii)} in simulation. \emph{(iii) Slow connectivity response.}
The $\det(P^TLP)$ constraint reacts slowly to a single weakening link within large groups. A safeguard in step-size that rejects disconnecting moves, and a constraint radius below the true communication range, avoid this in Algorithm \ref{alg:ppala}. These make recurrent sensing plausible but not provable which is why we treat the trajectory planner as a black box moving forward, under certain reasonable assumptions.
\subsection{Proof of Single-Target Convergence}
\paragraph{Model} The target is in a fixed, unknown cell $p\in X_h$, with $c=|X_h|$. Note from~\eqref{eq:phi} $\ell_t\in(0,\infty)$, so $\phi(\cdot,0) > 0$ implies $\phi(\cdot,t) > 0$ for all $t$. %
Let $\mathcal{F}_t$ denote the history up to step $t$, such as the reports $y_0,\dots,y_{t-1}$ together with any randomness used by the planner before $t$. For $q\neq p$, let
$R_t(q)=\phi(q,t)/\phi(p,t)$ and let $N_t(r)=|\{s<t:r\in X_{h,s,active}\}|$ count the sensed looks at $r$.

\begin{assumption}\label{as:std}
\begin{itemize}
  \item[(A1)] Probabilities $P_{fa}, P_d(q,x)$ satify $0<P_{fa}\le P_d(q,x)<1$. Given $\mathcal{F}_t$, the reports $\{y_t(q)\}_{q\in L_t}$ are independent with
        $y_t(p)\sim\mathrm{Bern}(P_d(p,x(t)))$ and $y_t(q)\sim\mathrm{Bern}(P_{fa})$ for $q\neq p$.
  \item[(A2)] State $x(t)$ and set $L_t$ are $\mathcal{F}_t$-measurable (determined from the past information in $\mathcal{F}_t$). All agents share belief~\eqref{eq:phi}, i.e., we assume information is pooled via connectivity. %
  \item[(A3)] Prior $\phi(q,0)$ satisfies $\phi(q,0)>0$ for all $q$, and $\delta<1-P_{fa}$.
\end{itemize}
\end{assumption}

Two cells $p_h,q_h\in X_h$ are \emph{distinguishable over $[t,t+T]$} if
$\{p_h,q_h\}\cap\bigcup_{\tau=t}^{t+T}X_{h,\tau,active}\neq\emptyset$, sensing either cell separates the two hypotheses. We relax global identifiability \cite{mitra2020new} to a windowed, belief-weighted form.

\begin{assumption}[Aggregated Global Identifiability]\label{as:AGI}
There exist $T<\infty$ and $\epsilon_{th}>0$ such that, almost surely, for every $t$ and every $q\neq p$
with $\phi(q,t)\ge\epsilon_{th}$, the cells $p$ and $q$ are distinguishable over $[t,t+T]$.
\end{assumption}

 \emph{Strong form}: The strong form of Assumption \ref{as:AGI} is that every $q$, or $p$, is sensed infinitely often (periodic coverage). This is only used in Theorem \ref{thm:conv}(i).

\paragraph{Constants} With $D(a\|b)=a\log\frac ab+(1-a)\log\frac{1-a}{1-b}$ (Bernoulli KL divergence) and
$h_\lambda(a,b)=a^{1-\lambda}b^\lambda+(1-a)^{1-\lambda}(1-b)^\lambda$ (Chernoff coefficient between Bernoulli distributions), define $\kappa_0=D(P_{fa}\|1-\delta)$, $\kappa_1=D(1-\delta\|P_{fa})$,
$\beta^\lambda_0=-\log h_\lambda(P_{fa},1-\delta)$, $\beta^\lambda_1=-\log h_\lambda(1-\delta,P_{fa})$ and
$\beta_\delta=\beta^{1/2}_0=\beta^{1/2}_1$ (Bhattacharyya distance). 

\begin{lemma}[Martingales]\label{lem:mart}
Under (A1)--(A3): (i) $R_{t+1}(q)=R_t(q)Z_t$ with $Z_t=\ell_t(q)/\ell_t(p)$. For $\lambda\in(0,1]$, $\mathbb{E}[Z_t^\lambda|\mathcal{F}_t]=h_\lambda(P_{fa},\pi_t(q))\,h_\lambda(\pi_t(p),P_{fa})$, where $\pi_t(r)=P_d(r,x(t))$ if
$r\in L_t$ and $P_{fa}$ otherwise. (ii) With $B^\lambda_t=-\sum_{s<t}\log\mathbb{E}[Z_s^\lambda|\mathcal{F}_s]$, where $\mathbb{E}$ denotes expectation, the process
$G^\lambda_t=R_t(q)^\lambda e^{B^\lambda_t}$ is a nonnegative martingale. For $\lambda=1$ this is $R_t(q)$ itself.
\end{lemma}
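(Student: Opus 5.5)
Part (i) follows directly from the Bayes update \eqref{eq:phi}. The normalizing denominator $\sum_r\phi(r,t)\ell_t(r)$ is common to $\phi(q,t+1)$ and $\phi(p,t+1)$, so it cancels in the ratio:
\[
R_{t+1}(q)=\frac{\phi(q,t)\ell_t(q)}{\phi(p,t)\ell_t(p)}=R_t(q)Z_t .
\]
Positivity of $\phi(\cdot,t)$ and of $\ell_t$ is guaranteed by (A1), (A3) and the remark in the Model paragraph. This makes $R_t$ and $Z_t$ well defined and finite.

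For the conditional moment, condition on $\mathcal{F}_t$. By (A2), $x(t)$ and $L_t$ are then fixed, so $\ell_t(q)$ and $\ell_t(p)$ are deterministic functions of the single reports $y_t(q)$ and $y_t(p)$. By (A1) these reports are conditionally independent. Hence $\mathbb{E}[Z_t^\lambda|\mathcal{F}_t]=\mathbb{E}[\ell_t(q)^\lambda|\mathcal{F}_t]\,\mathbb{E}[\ell_t(p)^{-\lambda}|\mathcal{F}_t]$, and I would compute each factor by cases.

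For $q\neq p$ with $q\in L_t$, the report is $\mathrm{Bern}(P_{fa})$ and $\ell_t(q)$ takes the values $P_d/P_{fa}$ and $P_e/(1-P_{fa})$, with $P_d=P_d(q,x(t))=\pi_t(q)$. Then
\[
\mathbb{E}[\ell_t(q)^\lambda|\mathcal{F}_t]=P_{fa}^{1-\lambda}\pi_t(q)^\lambda+(1-P_{fa})^{1-\lambda}(1-\pi_t(q))^\lambda=h_\lambda(P_{fa},\pi_t(q)).
\]
For $p\in L_t$, the report is $\mathrm{Bern}(\pi_t(p))$, and
\[
\mathbb{E}[\ell_t(p)^{-\lambda}|\mathcal{F}_t]=\pi_t(p)^{1-\lambda}P_{fa}^{\lambda}+(1-\pi_t(p))^{1-\lambda}(1-P_{fa})^\lambda=h_\lambda(\pi_t(p),P_{fa}).
\]
If a cell is not in $L_t$, its $\ell_t$ equals $1$. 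This agrees with the convention $\pi_t(r)=P_{fa}$, since $h_\lambda(P_{fa},P_{fa})=1$.

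One point needs checking: the formula as written requires interpreting $\pi_t(q)=P_d(q,x(t))$ as the parameter appearing in the likelihood ratio, not as the true report law. For $q\neq p$ the true report law is $P_{fa}$, and this is exactly why the first argument of $h_\lambda$ is $P_{fa}$. I expect this bookkeeping of which probability plays which role to be the only delicate part.

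For part (ii), first note that each factor is strictly positive. The Bernoulli parameters lie in $(0,1)$ by (A1), so $h_\lambda>0$ and $B^\lambda_t$ is finite. It is also $\mathcal{F}_{t}$-predictable, because the $s$-th term depends only on $x(s)$ and $L_s$, which are $\mathcal{F}_s$-measurable by (A2).

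Next, $G^\lambda_t$ is nonnegative and $\mathcal{F}_t$-measurable. Integrability follows by induction from the step below, since $G^\lambda_0=R_0(q)^\lambda$ is a constant. The martingale step is
\[
\mathbb{E}[G^\lambda_{t+1}|\mathcal{F}_t]=R_t(q)^\lambda e^{B^\lambda_{t+1}}\,\mathbb{E}[Z_t^\lambda|\mathcal{F}_t]=R_t(q)^\lambda e^{B^\lambda_t}=G^\lambda_t ,
\]
because $e^{B^\lambda_{t+1}}=e^{B^\lambda_t}/\mathbb{E}[Z_t^\lambda|\mathcal{F}_t]$.

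Finally, for $\lambda=1$ we have $h_1(a,b)=b+(1-b)=1$. Hence $B^1_t=0$ and $G^1_t=R_t(q)$, which is therefore itself a nonnegative martingale.
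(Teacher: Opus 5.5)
Your proof is correct and follows the paper's argument essentially step by step. You cancel the normalizer, factor $\mathbb{E}[Z_t^\lambda|\mathcal{F}_t]$ by the conditional independence in (A1), evaluate each factor as a Chernoff coefficient (unlooked cells give $h_\lambda(P_{fa},P_{fa})=1$), and use the $\mathcal{F}_t$-measurability of $B^\lambda_{t+1}$ from (A2) for the martingale step. Your explicit checks that $h_\lambda>0$ and that $G^\lambda_t$ is integrable (by induction, $\mathbb{E}[G^\lambda_t]=G^\lambda_0$) are small rigor additions that the paper leaves implicit.
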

\begin{proof}
(i) Normalizations cancel in the ratio. By (A1), $\ell_t(q)$ and
$\ell_t(p)$ are independent given $\mathcal{F}_t$, with
$\mathbb{E}[\ell_t(q)^\lambda |\mathcal{F}_t]=P_{fa}(\pi_t(q)/P_{fa})^\lambda+(1-P_{fa})((1-\pi_t(q))/(1-P_{fa}))^\lambda=h_\lambda(P_{fa},\pi_t(q))$ and
$\mathbb{E}[\ell_t(p)^{-\lambda}|\mathcal{F}_t]=h_\lambda(\pi_t(p),P_{fa})$. An unlooked cell has factor $1=h_\lambda(P_{fa},P_{fa})$, and
$h_1\equiv1$. (ii) By \emph{(i)}, $\mathbb{E}[Z_t^\lambda|\mathcal{F}_t]$ depends only on $x(t)$ which is $\mathcal{F}_t$-measurable by (A2). Therefore, $B^\lambda_{t+1}$ is $\mathcal{F}_t$-measurable and $\mathbb{E}[G^\lambda_{t+1}|\mathcal{F}_t]=R_t^\lambda e^{B^\lambda_{t+1}}\mathbb{E}[Z_t^\lambda|\mathcal{F}_t]=G^\lambda_t > 0$ since $R_t > 0$, so $G_t^\lambda$ is a non-negative martingale.
\end{proof}

\begin{lemma}[Guaranteed Evidence per Look]\label{lem:info}
Under (A1)--(A3) and for $\lambda\in(0,1)$:
$B^\lambda_t\ge\beta^\lambda_0N_t(q)+\beta^\lambda_1N_t(p)$, $\beta^\lambda_i>0$, and $\beta^\lambda_i/\lambda\uparrow\kappa_i$ as $\lambda\downarrow0$.
\end{lemma}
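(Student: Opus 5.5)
From Lemma~\ref{lem:mart}(i), each term of $B^\lambda_t$ splits as
\[
-\log\mathbb{E}[Z_s^\lambda\mid\mathcal{F}_s] = -\log h_\lambda(P_{fa},\pi_s(q)) - \log h_\lambda(\pi_s(p),P_{fa}).
\]
The plan is to bound the two summands separately and then sum over $s<t$.

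\textbf{Step 1: every summand is nonnegative.} By H\"older's inequality (or concavity of $u\mapsto u^\lambda$), $h_\lambda(a,b)\le 1$ for any Bernoulli pair. Equality holds when $a=b$, which is exactly the unlooked case $\pi_s(r)=P_{fa}$. So cells that are not sensed at step $s$ contribute $0\ge 0$.

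\textbf{Step 2: a sensed cell contributes at least a fixed constant.} If $r\in X_{h,s,active}$, then $b:=\pi_s(r)=P_d(r,x(s))\ge 1-\delta$. By (A3), $1-\delta>P_{fa}$. I will show that for fixed $a=P_{fa}$, the map $b\mapsto h_\lambda(a,b)$ is nonincreasing on $[a,1]$. Its derivative is
\[
\lambda\Big[a^{1-\lambda}b^{\lambda-1} - (1-a)^{1-\lambda}(1-b)^{\lambda-1}\Big]
= \lambda\Big[(a/b)^{1-\lambda} - \big((1-a)/(1-b)\big)^{1-\lambda}\Big],
\]
which is $\le 0$ because $a/b\le 1\le (1-a)/(1-b)$ when $b\ge a$. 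Hence $h_\lambda(P_{fa},\pi_s(q))\le h_\lambda(P_{fa},1-\delta)$, which gives
\[
-\log h_\lambda(P_{fa},\pi_s(q))\ge\beta^\lambda_0 \quad\text{whenever } q \text{ is sensed at } s.
\]
For the $p$-term, use the symmetry $h_\lambda(b,a)=h_{1-\lambda}(a,b)$ and apply the same monotonicity with $1-\lambda\in(0,1)$. This yields $-\log h_\lambda(\pi_s(p),P_{fa})\ge\beta^\lambda_1$ whenever $p$ is sensed.

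\textbf{Step 3: sum over $s<t$.} Combining Steps 1 and 2 gives $B^\lambda_t\ge\beta^\lambda_0N_t(q)+\beta^\lambda_1N_t(p)$. Strict positivity $\beta^\lambda_i>0$ follows from strict H\"older: $h_\lambda(a,b)<1$ for $a\neq b$ and $\lambda\in(0,1)$.

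\textbf{Step 4: the limit $\lambda\downarrow 0$.} Define $\psi(\lambda)=\log h_\lambda(a,b)$. This is the cumulant generating function of $\log(b/a)$-type variables under $\mathrm{Bern}(a)$. It is convex in $\lambda$, with $\psi(0)=\psi(1)=0$ and $\psi'(0)=a\log\frac ba+(1-a)\log\frac{1-b}{1-a}=-D(a\|b)$. By convexity, $-\psi(\lambda)/\lambda$ is monotone in $\lambda$: the chord slope $(\psi(\lambda)-\psi(0))/\lambda$ is nondecreasing. So $\beta^\lambda/\lambda=-\psi(\lambda)/\lambda$ is nonincreasing in $\lambda$ and converges to $-\psi'(0)=D(a\|b)$ as $\lambda\downarrow 0$. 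With $(a,b)=(P_{fa},1-\delta)$ this gives the limit $\kappa_0$, and with $(a,b)=(1-\delta,P_{fa})$ it gives $\kappa_1$.

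\textbf{Main obstacle.} The delicate step is the $p$-term orientation in Step 2. There one must check that the Chernoff coefficient with arguments swapped is still decreasing as $\pi_s(p)$ moves away from $P_{fa}$, and that the resulting constant matches the stated $\beta^\lambda_1=-\log h_\lambda(1-\delta,P_{fa})$.
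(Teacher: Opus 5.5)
Your proposal is correct and follows the paper's proof essentially step for step. Both use H\"older (or Jensen) for $h_\lambda\le 1$, then the sign of the derivative of $h_\lambda$ to bound each sensed look by its rim value at $1-\delta$, then summation over $s<t$, and finally the convex-CGF chord-slope argument for $\beta^\lambda_i/\lambda\uparrow\kappa_i$. The only cosmetic difference is that you treat the $p$-term through the symmetry $h_\lambda(b,a)=h_{1-\lambda}(a,b)$, whereas the paper uses $\partial_a h_\lambda<0$ for $a>b$ directly; both give $h_\lambda(\pi_s(p),P_{fa})\le h_\lambda(1-\delta,P_{fa})$.
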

\begin{proof}
By H\"older, $h_\lambda(a,b)\le1$, with equality iff $a=b$. So each step adds $\ge0$ to $B^\lambda_t$, and $\beta^\lambda_i>0$
since $1-\delta\neq P_{fa}$ by (A3). Since $\partial_bh_\lambda<0$ for $b>a$ and $\partial_ah_\lambda<0$ for $a>b$, at step $s$ a sensed $q$
($\pi_s(q)\ge1-\delta>P_{fa}$) gives $h_\lambda(P_{fa},\pi_s(q))\le h_\lambda(P_{fa},1-\delta)$, and a sensed $p$ gives
$h_\lambda(\pi_s(p),P_{fa})\le h_\lambda(1-\delta,P_{fa})$. Summing over $s<t$ gives the bound on $B_t^\lambda$. Finally, $\Lambda(\lambda)=\log h_\lambda(a,b)$ is the (convex) cumulant
generating function of the log-likelihood ratio of one report, with $\Lambda(0)=\Lambda(1)=0$. The chord slope of a convex function from the origin is nondecreasing, therefore
$-\Lambda(\lambda)/\lambda$ is nonincreasing in $\lambda$, with limit $-\Lambda'(0)=D(a\|b)$.
\end{proof}

\begin{theorem}[False confirmations]\label{thm:safe}
Under (A1) and (A3), for any planner satisfying (A2): (i) $\phi(\cdot,t)\to\phi_\infty$ almost surely (a.s.); (ii) for $\eta\in(0,1)$,
$\mathbb{P}\big(\exists t,\exists q\neq p:\phi(q,t)\ge1-\eta\big)\le\frac{\eta}{1-\eta}\cdot\frac{1-\phi(p,0)}{\phi(p,0)}$, where $\mathbb{P}$ denotes probability.
\end{theorem}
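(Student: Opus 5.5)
Both parts rest on Lemma~\ref{lem:mart} with $\lambda=1$. There, $R_t(q)=\phi(q,t)/\phi(p,t)$ is a nonnegative martingale for each $q\neq p$, and it is well defined because (A3) and the positivity of $\ell_t$ keep $\phi(\cdot,t)>0$.

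\textbf{Part (i).} I would sum the martingales. Define
\[
M_t=\sum_{q\neq p}R_t(q)=\frac{1-\phi(p,t)}{\phi(p,t)}.
\]
This is a finite sum of nonnegative martingales, so it is itself a nonnegative martingale. By the martingale convergence theorem each $R_t(q)$ converges a.s. to a finite limit $R_\infty(q)$. Since
\[
\phi(p,t)=\frac{1}{1+M_t},
\]
$\phi(p,t)$ converges a.s. to $\phi_\infty(p)=1/(1+M_\infty)>0$. Then $\phi(q,t)=R_t(q)\phi(p,t)$ converges a.s. to $\phi_\infty(q)=R_\infty(q)/(1+M_\infty)$. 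The limit is a probability mass function because $c$ is finite. No identifiability assumption is needed here; only (A2) is used, since it makes the planner adapted and hence gives the martingale property.

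\textbf{Part (ii).} I would apply Ville's maximal inequality to $M_t$. Suppose $\phi(q,t)\ge 1-\eta$ for some $q\neq p$. Then $\phi(p,t)\le 1-\phi(q,t)\le\eta$, so
\[
M_t=\frac{1-\phi(p,t)}{\phi(p,t)}\ge\frac{1-\eta}{\eta}.
\]
Hence the event $\{\exists t,\exists q\neq p:\phi(q,t)\ge1-\eta\}$ is contained in $\{\sup_t M_t\ge (1-\eta)/\eta\}$. For a nonnegative (super)martingale, Ville's inequality gives
\[
\mathbb{P}\Big(\sup_t M_t\ge a\Big)\le \frac{\mathbb{E}[M_0]}{a}.
\]
Here $M_0=(1-\phi(p,0))/\phi(p,0)$ is deterministic, and substituting $a=(1-\eta)/\eta$ yields exactly the stated bound. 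Working with $M_t$ rather than union-bounding over the individual $R_t(q)$ avoids a factor of $c-1$.

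\textbf{Main subtlety.} The delicate step is confirming that $M_t$ is a genuine martingale with respect to $\mathcal{F}_t$ for an arbitrary, possibly randomized, planner. This requires $L_t$ and $x(t)$ to be $\mathcal{F}_t$-measurable, which is (A2), together with conditional independence of the reports, which is (A1). Lemma~\ref{lem:mart} already establishes this for each $R_t(q)$. Beyond that, I only need to check that Ville's inequality applies to the supremum over all $t\in\mathbb{N}$. For this I would apply Doob's maximal inequality on $[0,T]$ and let $T\to\infty$ by monotone convergence.
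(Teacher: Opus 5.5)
Your proof is correct, and part (i) is exactly the paper's argument: each $R_t(q)$ converges a.s., and $\phi(p,t)=(1+\sum_{q\neq p}R_t(q))^{-1}$ carries the convergence to the beliefs. For part (ii) the paper takes a slightly different route. It applies Ville's inequality to each $R_t(q)$ separately, giving $\mathbb{P}(\exists t:\phi(q,t)\ge 1-\eta)\le \frac{\eta}{1-\eta}\cdot\frac{\phi(q,0)}{\phi(p,0)}$, and then takes a union bound over $q\neq p$.

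Your closing remark that working with the aggregate martingale ``avoids a factor of $c-1$'' is mistaken. Since $\sum_{q\neq p}R_0(q)=(1-\phi(p,0))/\phi(p,0)=M_0$, the union bound reproduces exactly the same constant, so nothing is lost there.

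What your route genuinely buys is control of a larger event. One application of Ville to $M_t$ bounds $\mathbb{P}(\exists t:\phi(p,t)\le\eta)$, since $M_t\ge(1-\eta)/\eta$ is equivalent to $\phi(p,t)\le\eta$. This is the probability that the target's belief is ever driven down to $\eta$, even if the displaced mass is spread over many wrong cells rather than piled on one, and it contains the stated event. What the paper's route buys is a per-cell refinement: the risk of falsely confirming a particular cell $q$ scales with its prior odds $\phi(q,0)/\phi(p,0)$.
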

\begin{proof}
(i) By the martingale convergence theorem, each martingale $R_t(q)\ge0$ converges a.s. It can be seen that $\phi(p,t)=(1+\sum_{q\neq p}R_t(q))^{-1}$ and
$\phi(q,t)=R_t(q)\phi(p,t)$, therefore the beliefs converge as well. (ii) $\phi(q,t)\ge1-\eta$ forces $R_t(q)\ge(1-\eta)/\eta$. Ville's Inequality ensures $\mathbb{P}[\exists t:\phi(q,t) \geq 1-\eta ] \leq R_0(q)\eta/(1-\eta)$. Sum over $q$.
\end{proof}

\begin{theorem}[Convergence]\label{thm:conv}
Under (A1)--(A3): (i) on the event that $p$ is sensed infinitely often, or under the strong form, $\phi(p,t)\to1$ a.s.; (ii) under Assumption~\ref{as:AGI}, a.s.\ $\lim_t\phi(q,t)\le\epsilon_{th}$ for every $q\neq p$. Hence $\lim_t\phi(p,t)\ge1-(c-1)\epsilon_{th}$, which is informative only when $\epsilon_{th}<1/(c-1)$.
\end{theorem}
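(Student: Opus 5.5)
The plan is to work with the likelihood ratios $R_t(q)=\phi(q,t)/\phi(p,t)$ and to use the supermartingale $G^\lambda_t$ of Lemma~\ref{lem:mart} for a fixed $\lambda\in(0,1)$. Lemma~\ref{lem:info} lets the accumulated exponent $B^\lambda_t$ grow with the number of looks. By Theorem~\ref{thm:safe}(i), $R_t(q)$ converges a.s. to some limit $R_\infty(q)\in[0,\infty)$, and hence $\phi(\cdot,t)\to\phi_\infty$. So it suffices to show $R_\infty(q)=0$ on the relevant events, because $\phi(p,t)=(1+\sum_{q\neq p}R_t(q))^{-1}$.

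For part (i), fix $q\neq p$. Since $G^\lambda_t\ge0$ is a martingale, it converges a.s. to a finite limit, so $\sup_t G^\lambda_t<\infty$ a.s. This gives
\[
R_t(q)^\lambda\le G^\lambda_t\, e^{-B^\lambda_t}\le \Big(\sup_s G^\lambda_s\Big)\, e^{-\beta^\lambda_0 N_t(q)-\beta^\lambda_1 N_t(p)}.
\]
On the event that $p$ is sensed infinitely often, $N_t(p)\to\infty$, so $R_t(q)\to0$ for every $q\neq p$ simultaneously, since there are finitely many cells. Hence $\phi(p,t)\to1$. Under the strong form, either $p$ is sensed infinitely often, which is handled above, or every $q\neq p$ is sensed infinitely often, so $N_t(q)\to\infty$ and the same bound applies through $\beta^\lambda_0$. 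Either way $R_t(q)\to0$ for all $q\neq p$.

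For part (ii), fix $q\neq p$ and work on the event $E_q=\{\lim_t\phi(q,t)>\epsilon_{th}\}$; the goal is $\mathbb{P}(E_q)=0$. On $E_q$ there is a random $t_0$ with $\phi(q,t)\ge\epsilon_{th}$ for all $t\ge t_0$. Assumption~\ref{as:AGI} applied at $t_0,\,t_0+T+1,\,t_0+2(T+1),\dots$ then gives at least one look at $p$ or at $q$ in each disjoint window of length $T+1$. Hence $N_t(p)+N_t(q)\ge\lfloor (t-t_0)/(T+1)\rfloor\to\infty$. With $\beta_{\min}=\min(\beta^\lambda_0,\beta^\lambda_1)>0$, the displayed bound gives $R_t(q)^\lambda\le(\sup_s G^\lambda_s)\,e^{-\beta_{\min}(N_t(p)+N_t(q))}\to0$. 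Then $\phi(q,t)=R_t(q)\phi(p,t)\le R_t(q)\to0$, which contradicts $\lim\phi(q,t)>\epsilon_{th}$. So $E_q$ is null. A union over the finitely many $q$ gives $\lim_t\phi(q,t)\le\epsilon_{th}$ for all $q\neq p$ a.s., and summing gives $\lim_t\phi(p,t)\ge1-(c-1)\epsilon_{th}$.

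The main obstacle is the measurability bookkeeping in part (ii). Assumption~\ref{as:AGI} is triggered at a random time $t_0$, but it is stated "almost surely, for every $t$". We therefore need the a.s. statement to hold simultaneously for all deterministic $t$, via a countable union, so that it can be evaluated at the random $t_0$. We also need $\lim_t\phi(q,t)$ to exist, which Theorem~\ref{thm:safe}(i) supplies. A subtler point is the boundary case $\lim_t\phi(q,t)=\epsilon_{th}$ approached from below, where the assumption is never triggered. The statement only claims $\le\epsilon_{th}$, so this case is harmless. Finally, the uniform bound $\sup_s G^\lambda_s<\infty$ comes from the martingale convergence theorem rather than from Ville's inequality, and this is what makes the argument pathwise.
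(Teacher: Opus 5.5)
Your proposal is correct and follows the paper's proof essentially step for step: martingale convergence gives $\sup_t G^\lambda_t<\infty$ a.s., Lemma~\ref{lem:info} then yields $R_t(q)\to0$ whenever $N_t(q)+N_t(p)\to\infty$, and part (ii) is the same contradiction argument, using the limit from Theorem~\ref{thm:safe}(i) together with Assumption~\ref{as:AGI}. The only differences are cosmetic: the paper fixes $\lambda=1/2$ so that both exponents equal the Bhattacharyya constant $\beta_\delta$, whereas you keep a general $\lambda\in(0,1)$ with $\beta_{\min}$, and your disjoint-window count $\lfloor (t-t_0)/(T+1)\rfloor$ is a slightly more careful version of the paper's ``sensed in every window'' step.
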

\begin{proof}
Take $\lambda = 1/2$. By Lemma \ref{lem:mart} we can write $\sqrt{R_t} = G_t\cdot e^{-B_t}$ and the martingale $G_t=G^{1/2}_t$ converges a.s., so $G^*=\sup_tG_t<\infty$. Therefore, by Lemma~\ref{lem:info},
$R_t(q)\le(G^*)^2e^{-2\beta_\delta[N_t(q)+N_t(p)]}$, which vanishes if $N_t(q)+N_t(p)\to\infty$. (i) If $N_t(p)\to\infty$, or under the
strong form, this holds for every $q$, giving $\phi(p,t)=(1+\sum_qR_t(q))^{-1}\to1$. (ii) Suppose $\phi_\infty(q)>\epsilon_{th}$ (the limit exists by Theorem~\ref{thm:safe}), then for all sufficiently large $t$
$\phi(q,t)\ge\epsilon_{th}$. By Assumption \ref{as:AGI},  $p$ or $q$ is sensed in every window of finite length $T$, so $N_t(q)+N_t(p)\to\infty$ which we showed gives $\phi_\infty(q)=0$. Contradiction. Therefore for any $q\neq p$, $\lim_t\phi(q,t)\le\epsilon_{th}$ which implies $\lim_t\phi(p,t)\ge1-(c-1)\epsilon_{th}$.
\end{proof}

\begin{theorem}[Exponential rejection]\label{thm:rate}
Under (A1)--(A3), for every $q\neq p$, a.s. $\liminf_{t\to\infty} -\frac{1}{t}\log\phi(q,t) \geq \liminf_{t\to\infty}\frac{\kappa_0N_t(q)+\kappa_1N_t(p)}{t}$. If a stronger assumption is used, that $p$ or $q$ is sensed at least once in every window of length $T$, the right-hand side becomes at least $\min(\kappa_0, \kappa_1)/T$.
\end{theorem}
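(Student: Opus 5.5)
Throughout, fix $q\neq p$ and use $\phi(q,t)\le R_t(q)$, which holds because $\phi(p,t)\le1$. It therefore suffices to show
\[
\liminf_t -\tfrac1t\log R_t(q)\ \ge\ \liminf_t \frac{\kappa_0N_t(q)+\kappa_1N_t(p)}{t}.
\]
The plan is to repeat the proof of Theorem~\ref{thm:conv} for a general $\lambda\in(0,1)$ and then send $\lambda\downarrow0$.

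\textbf{Bound for fixed $\lambda$.} By Lemma~\ref{lem:mart}(ii), $G^\lambda_t=R_t(q)^\lambda e^{B^\lambda_t}$ is a nonnegative martingale. It therefore converges a.s. to a finite limit, so $G^{\lambda,*}=\sup_tG^\lambda_t<\infty$ a.s. Rearranging and applying Lemma~\ref{lem:info} gives
\[
\log R_t(q)\le \tfrac1\lambda\log G^{\lambda,*}-\tfrac{1}{\lambda}\big(\beta^\lambda_0N_t(q)+\beta^\lambda_1N_t(p)\big).
\]
Divide by $-t$ and take $\liminf$. The term $\frac{1}{\lambda t}\log G^{\lambda,*}$ vanishes a.s. because $G^{\lambda,*}$ is a finite random variable. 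This yields
\[
\liminf_t -\tfrac1t\log\phi(q,t)\ge \liminf_t \frac{(\beta^\lambda_0/\lambda)N_t(q)+(\beta^\lambda_1/\lambda)N_t(p)}{t}.
\]

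\textbf{Sending $\lambda\downarrow0$.} This is the main technical point. Take a countable sequence $\lambda_k\downarrow0$ so that the a.s. statements can be intersected into one null-set exception. By Lemma~\ref{lem:info}, $\beta^{\lambda_k}_i/\lambda_k\uparrow\kappa_i$. Since $N_t(\cdot)/t\in[0,1]$, we have
\[
\Big|\frac{(\beta^{\lambda}_0/\lambda-\kappa_0)N_t(q)+(\beta^{\lambda}_1/\lambda-\kappa_1)N_t(p)}{t}\Big|\le(\kappa_0-\beta^{\lambda}_0/\lambda)+(\kappa_1-\beta^{\lambda}_1/\lambda).
\]
Thus the right-hand side at $\lambda_k$ is within an error $\to0$ of the target $\liminf$, uniformly in $t$. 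Letting $k\to\infty$ gives the first claim. One must also check that $\kappa_i<\infty$, which holds because $P_{fa},\,1-\delta\in(0,1)$ under (A1) and (A3).

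\textbf{Windowed case.} Suppose every window of length $T$ contains a time at which $p$ or $q$ is sensed. Then
\[
N_t(q)+N_t(p)\ge\lfloor t/T\rfloor,
\]
since either a sensed $q$ or a sensed $p$ contributes to $N_t(q)+N_t(p)$. Hence
\[
\kappa_0N_t(q)+\kappa_1N_t(p)\ge\min(\kappa_0,\kappa_1)\lfloor t/T\rfloor.
\]
Dividing by $t$ and taking $\liminf$ gives $\min(\kappa_0,\kappa_1)/T$.
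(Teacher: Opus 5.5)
Your proof is correct and follows the paper's argument essentially step for step. You bound $R_t(q)^\lambda$ by $\sup_t G^\lambda_t\, e^{-B^\lambda_t}$, apply Lemma~\ref{lem:info}, let the $\frac{1}{\lambda t}\log G^{\lambda *}$ term vanish, pass $\lambda\downarrow 0$ along a countable sequence, and use $N_t(q)+N_t(p)\ge\lfloor t/T\rfloor$ for the windowed case; the only cosmetic difference is that you control the $\lambda\downarrow 0$ limit with an additive error uniform in $t$ (via $N_t/t\le 1$), whereas the paper uses the multiplicative factor $\rho(\lambda)=\min_i \beta^\lambda_i/(\lambda\kappa_i)\to 1$, and both are equally valid.
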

\begin{proof}
Again, for $\lambda\in(0,1)$, $G^{\lambda*}=\sup_tG^\lambda_t<\infty$ a.s. By Lemma \ref{lem:mart} $R_t^\lambda = G_t^\lambda e^{-B_t^\lambda}\leq G^{\lambda*}e^{-B_t^\lambda}$. Taking logs, $\lambda \log R_t(q) \leq \log G^{\lambda*} - B_t^\lambda$. By Lemma \ref{lem:info}, we substitute $B_t^\lambda$ and divide by $-\lambda t$, getting: $-\frac{1}{t} \log R_t(q) \geq -\frac{\log G^{\lambda*}}{\lambda t} + \frac{\beta_0^\lambda}{\lambda} \frac{N_t(q)}{t} + \frac{\beta_1^\lambda}{\lambda} \frac{N_t(p)}{t}$.
Define $I_t=\frac1t(\kappa_0N_t(q)+\kappa_1N_t(p))$ and $\rho(\lambda)=\min_i\frac{\beta^\lambda_i}{\lambda\kappa_i}$. By Lemma \ref{lem:info} $\frac{\beta_i^\lambda}{\lambda} \leq \kappa_i \implies\rho(\lambda)\in(0,1]$ and $\frac{\beta_i^\lambda}{\lambda} \geq \rho(\lambda)\kappa_i$, so $-\frac1t\log R_t(q)\geq\rho(\lambda)I_t-\frac{\log G^{\lambda*}}{\lambda t}$. As $t\to \infty$, $\liminf_{t\to\infty}\ -\frac{1}{t}\log R_t(q)  \ \geq \ \liminf_{t\to\infty}\rho(\lambda)I_t$.
This holds simultaneously $\forall\lambda\in\{1/k\}_{k\ge2}$ , where letting $k\to \infty$, $\rho\to1$ by Lemma \ref{lem:info} and use
$\phi(q,t)\le R_t(q)$ to transfer the bound to $\phi(q,t)$. For the stronger case, the assumption implies $N_t(q) + N_t(p) \geq \lfloor t/T\rfloor$ and $I_t\geq \min(\kappa_0, \kappa_1)\frac{(N_t(q) + N_t(p))}{t}\geq \min(\kappa_0, \kappa_1)\frac{\lfloor t/T\rfloor}{t}$. As $t\to \infty$ we get our min rate. 
\end{proof}

This bound is attained at the worst case (gated update within $X_{h,t,active}$, every look at the rim $P_e=\delta$, and $p$ never sensed). Unlike the min-rule rate $\max_vK_v$ of \cite{mitra2020new}, it accumulates over looks since our team pools reports into one belief instead of relying on the best-informed agent. On average, each sensed look removes $\kappa_0\approx\log(1/\delta)$ nats from an empty cell's odds, or adds $\kappa_1\approx\log(1/P_{fa})$ for the target.

\begin{corollary}[Finite-time bound; time and error of confirmation]\label{cor:tau}
Under (A1)--(A3), (i) for all $t$, $m\ge0$, and $r>0$:
$\mathbb{P}(\phi(q,t)\ge r)\le\sqrt{\phi(q,0)/(r\phi(p,0))}\,e^{-\beta_\delta m}+\mathbb{P}(N_t(q)+N_t(p)<m)$. (ii) Let
$\tau_\eta=\inf\{t:\max_q\phi(q,t)\ge1-\eta\}$ and $\hat p$ the confirmed target cell. If every window of length $T$ senses $p$ or all of $X_h \setminus \{p\}$ on every sample path, then with
$\Gamma_0=\sum_{q\neq p}\sqrt{\phi(q,0)/\phi(p,0)}$,
\[
\begin{split}
  \mathbb{E}[\tau_\eta]\le T\big(\lceil\log^+(\Gamma_0/\sqrt\eta)/\beta_\delta\rceil+(1-e^{-\beta_\delta})^{-1}\big),\\
  \mathbb{P}(\hat p\neq p)\le\tfrac{\eta}{1-\eta}\cdot\tfrac{1-\phi(p,0)}{\phi(p,0)} .
\end{split}
\]
(iii) If  $p\sim\phi(\cdot,0)$, then $\mathbb{P}(\tau_\eta<\infty,\ \hat p\neq p)\le\eta$.
\end{corollary}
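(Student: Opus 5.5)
For part (i), we split on the event $\{N_t(q)+N_t(p)\ge m\}$ and its complement, and use the $\lambda=1/2$ martingale of Lemma~\ref{lem:mart}. Since $\phi(q,t)\le R_t(q)$, the event $\phi(q,t)\ge r$ forces $\sqrt{R_t(q)}\ge\sqrt r$. On $\{N_t(q)+N_t(p)\ge m\}$, Lemma~\ref{lem:info} gives $B^{1/2}_t\ge\beta_\delta m$. Hence $G^{1/2}_t=\sqrt{R_t(q)}e^{B^{1/2}_t}\ge\sqrt r\,e^{\beta_\delta m}$. Markov's inequality together with $\mathbb{E}[G^{1/2}_t]=G^{1/2}_0=\sqrt{R_0(q)}$ then bounds this piece by $\sqrt{R_0(q)/r}\,e^{-\beta_\delta m}$. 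The complementary event contributes $\mathbb{P}(N_t(q)+N_t(p)<m)$, which gives (i). We use the fixed-time Markov bound rather than Ville's inequality, so no stopping argument is needed.

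For the expectation in (ii), we write $\mathbb{E}[\tau_\eta]=\sum_{t\ge0}\mathbb{P}(\tau_\eta>t)$. The event $\tau_\eta>t$ implies $\phi(p,t)<1-\eta$, which in turn implies $\sum_{q\neq p}R_t(q)>\eta/(1-\eta)\ge\eta$. The useful inequality is $\sum_q R_t(q)\le(\sum_q\sqrt{R_t(q)})^2$. This gives $\mathbb{P}(\tau_\eta>t)\le\mathbb{P}(\sum_{q\ne p}\sqrt{R_t(q)}>\sqrt\eta)$.

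The window hypothesis is what lets the cells be treated jointly. Every window of length $T$ senses either $p$ or every $q\neq p$. Hence, on every sample path, $N_t(q)+N_t(p)\ge\lfloor t/T\rfloor$ holds simultaneously for all $q$, and so $B^{1/2}_t(q)\ge\beta_\delta\lfloor t/T\rfloor$ deterministically for all $q$. Markov's inequality applied to $\sum_q G^{1/2}_t(q)e^{-B^{1/2}_t(q)}$ then gives
\[
\mathbb{P}(\tau_\eta>t)\le\min\{1,\Gamma_0 e^{-\beta_\delta\lfloor t/T\rfloor}/\sqrt\eta\}.
\]
To sum, we group the times $t$ into blocks of length $T$. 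Each of the first $k_0=\lceil\log^+(\Gamma_0/\sqrt\eta)/\beta_\delta\rceil$ blocks is bounded by $1$. For the remaining blocks the bound is a geometric series with ratio $e^{-\beta_\delta}$, whose sum is at most $(1-e^{-\beta_\delta})^{-1}$. Multiplying by the block length $T$ gives the stated bound.

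The main obstacle is making this bookkeeping exact. The bound must hold with no further randomness in the sensing counts, which is why the "$p$ or all of $X_h\setminus\{p\}$" form of the hypothesis is needed: it makes $B_t$ deterministically large for every $q$ at once. The remaining care is in the off-by-one at block boundaries.

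The error bound in (ii) is immediate from Theorem~\ref{thm:safe}(ii). The event $\hat p\neq p$ requires some $q\neq p$ with $\phi(q,\tau_\eta)\ge1-\eta$, which is contained in the event bounded there.

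For (iii), we condition on the true target and apply Ville's inequality to $R_t(q)$ for each $q$. This gives $\mathbb{P}(\hat p\neq p\mid p)\le\sum_{q\neq p}\frac{\phi(q,0)}{\phi(p,0)}\cdot\frac{\eta}{1-\eta}$. Averaging over $p\sim\phi(\cdot,0)$, the $\phi(p,0)$ factors cancel, leaving $\frac{\eta}{1-\eta}\sum_p\sum_{q\neq p}\phi(q,0)=(c-1)\frac{\eta}{1-\eta}$. This is too crude, so a sharper argument is needed.

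Instead, we work directly under the Bayesian joint law. Given $\mathcal{F}_{\tau_\eta}$, the posterior probability that the target is the confirmed cell $\hat p$ equals $\phi(\hat p,\tau_\eta)\ge1-\eta$. This holds because the belief update \eqref{eq:phi} is the exact Bayes posterior under (A1) when the prior is the true law, and (A2) ensures the planner introduces no extra information. Hence $\mathbb{P}(\tau_\eta<\infty,\hat p\neq p)=\mathbb{E}[\mathbf 1\{\tau_\eta<\infty\}(1-\phi(\hat p,\tau_\eta))]\le\eta$. The step to justify carefully is this optional-stopping identification of $\phi$ as the true posterior at the random time $\tau_\eta$. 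Summing over the disjoint events $\{\tau_\eta=t\}$, each of which is $\mathcal{F}_t$-measurable, handles it.
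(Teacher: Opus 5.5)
Your proposal is correct and follows essentially the same route as the paper. The paper also uses the $\lambda=1/2$ martingale with a fixed-time Markov bound, split on $\{N_t(q)+N_t(p)\ge m\}$, for (i). For (ii) it uses the tail-sum formula, square-root subadditivity, Markov, and a blockwise geometric sum, with Theorem~\ref{thm:safe} giving the error bound. For (iii) it uses the exact-posterior identity at $\tau_\eta$. Your decomposition over the disjoint events $\{\tau_\eta=t\}$ is just a more explicit version of the paper's conditioning on $\mathcal{F}_{\tau_\eta}$.
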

\begin{proof}
From Lemma \ref{lem:mart}, $G_t = G_t^{1/2} =\sqrt{R_t(q)}e^{B_t}$ is a martingale, so $\mathbb{E}[G_t] = G_0 = \sqrt{R_0(q)}$ since $B_0 = 0$. At $\lambda = 1/2$, from Lemma \ref{lem:info} $B_t \geq \beta_\delta[N_t(q) + N_t(p)]$, with $\beta_\delta = 2.29$. We will also use Markov's inequality. \\
\emph{(i)} Let event $A=\{N_t(q)+N_t(p)\ge m\}$, we have $\mathbb{P}(R_t \geq r) = \mathbb{P}(R_t \geq r, A) + \mathbb{P}(R_t \geq r, A^c)\leq \mathbb{P}(R_t\geq r, A) + \mathbb{P}(A^c)$. Lemma \ref{lem:info} gives $B_t\geq \beta_\delta m$ on $A$, implying $\sqrt{R_t} = G_te^{-B_t}\leq G_te^{-\beta_\delta m}$. $\mathbf{1}_A$ is $1$ on $A$ and $0$ off it so $\sqrt{R_t}\mathbf{1}_A\leq G_t e^{-\beta_\delta m}$ also always holds. So, $\mathbb{E}[\sqrt{R_t}\mathbf{1}_A]\le\sqrt{R_0}e^{-\beta_\delta m}$ by substituting for $\mathbb{E}[G_t]$. The events $\{R_t\geq r, A\} = \{\sqrt{R_t}\mathbf{1}_A\geq \sqrt{r}\}$ and Markov's Inequality give $\mathbb{P}(\sqrt{R_t}\mathbf{1}_A\geq \sqrt{r}) \leq \mathbb{E}[\sqrt{R_t}\mathbf{1}_A]/\sqrt{r}$. Putting these together: $\mathbb{P}(R_t\geq r, A)\leq \frac{\mathbb{E}[\sqrt{R_t}\mathbf{1}_A]}{\sqrt{r}}\leq \sqrt{\frac{R_0}{r}}e^{-\beta_\delta m}$. After adding $\mathbb{P}(A^c)$ and using $\phi(q,t)\leq R_t(q)$ we realize (i). \\
\emph{(ii)} Take $m = \lfloor t/T \rfloor$ after $t$ steps since every window of $T$ steps contains a sensed look at $q$ or $p$ for every $q\neq p$, so $N_t(q) + N_t(p) \geq \lfloor t/T \rfloor$. Then, $A$ always occurs and by steps in (i), $\mathbb{E}[\sqrt{R_t(q)}]\leq \sqrt{R_0(q)}e^{-\beta_\delta \lfloor t/T\rfloor}$ for every wrong cell $q$. If by step $t$, the team has not confirmed, then $\phi(p,t) < 1 - \eta$ and $1-\phi(p,t) = \frac{\sum_q R_t(q)}{1 + \sum_q{R_t(q)}}\leq \sum_q R_t(q)$, which means $\mathbb{P}(\tau_\eta>t)\leq \mathbb{P}(\sum_{q\neq p}R_t(q) \geq \eta)$  because $\{\tau_\eta > t\}\subseteq\{\phi(p,t)< 1 - \eta\}\subseteq\{\sum_{q\neq p}R_t(q) \geq \eta\}$. By Markov's inequality, subadditivity of square roots, and by our bound of the expectation above, $\mathbb{P}(\sum_q R_t(q)\geq \eta) = \mathbb{P}(\sqrt{\sum_q R_t(q)}\geq \sqrt{\eta})\leq \frac{\sum_q\mathbb{E}\sqrt{R_t(q)}}{\sqrt{\eta}}\leq \frac{\Gamma_0}{\sqrt{\eta}}e^{-\beta_\delta \lfloor t/T\rfloor}$ where $\Gamma_0 = \sum_{q\neq p} \sqrt{R_0(q)}$ ($\Gamma_0$ can be seen as a measure of how spread out the prior is). Let $a = \Gamma_0/\sqrt{\eta}$ , $\tau_\eta$ is a nonnegative integer random variable, so $\mathbb{E}[\tau_\eta] = \sum_{t\geq0}\mathbb{P}(\tau_\eta > t)\leq\sum_{t\geq0}\min\{1, ae^{-\beta_\delta \lfloor t/T\rfloor}\} = T\sum_{k\geq 0}\min\{1, ae^{-\beta_\delta k}\}$ where $k = \lfloor t/T\rfloor$ is exactly the $k$-th window of $T$ timesteps. We re-index the sum over windows. Let $k_0 = \lceil \log^+ a/\beta_\delta \rceil$, such that $k_0$ is the first window where $ae^{-\beta_\delta k} \leq 1$. For $k\geq k_0$, $ae^{-\beta_\delta k} = (ae^{-\beta_\delta k_0})e^{-\beta_\delta(k - k_0)}\leq e^{-\beta_\delta(k - k_0)}$. We can write $\sum_{k\geq 0} \min\{1, ae^{-\beta_\delta k}\}\leq k_0 + \sum_{j\geq 0} e^{-\beta_\delta j} = k_0 + \frac{1}{1-e^{-\beta_\delta}}$. Combining these two inequalities we get the first bound on $\mathbb{E}[\tau_\eta]$ in (ii). The probability the confirmed cell is wrong is at most the probability a wrong cell reaches belief $\geq 1 - \eta$ at some time, which is bounded by Theorem \ref{thm:safe}.\\
\emph{(iii)} Since target is drawn from $\phi(\cdot, 0)$ (so that $\mathbb{P}$ also averages over the target's location), $\phi(\cdot, t)$ is the true posterior of target's location given $\mathcal{F}_t$ for any $t$, including $\tau_\eta$ (i.e. $\mathbb{P}(\text{target at } q \ | \ \mathcal{F}_t) = \phi(q,t)$). On $\{\tau_\eta < \infty\}$, $\phi(\hat{p},\tau_\eta) \geq 1- \eta$, so $\mathbb{P}(\hat{p} \neq p \ | \ \mathcal{F}_{\tau_\eta}) = 1 - \phi(\hat{p},\tau_\eta)\leq \eta$. By law of total probability, averaging probability over all confirming runs gives: $
  \mathbb{P}(\tau_\eta<\infty,\ \hat p\neq p)
  =\mathbb{E}\Big[\mathbf{1}\{\tau_\eta<\infty\}\,\underbrace{\mathbb{P}(\hat p\neq p\,|\,\mathcal{F}_{\tau_\eta})}_{\le\,\eta}\Big]
  \ \le\ \eta\,\mathbb{E}\big[\mathbf{1}\{\tau_\eta<\infty\}\big]
  =\eta\,\mathbb{P}(\tau_\eta<\infty)\ \le\ \eta .$
\end{proof}

\begin{remark}[Scope and limitations]\label{rem:scope}

\emph{Communication.} We assume lossless, instantaneous sharing within $r_c$. \emph{Splits/Breaks in Connectivity.} (A2) fails while the team is split, but each sub-team holds the exact posterior of its own reports, so results hold per sub-team with its own look counts.
\emph{Misspecification.} Results assume $P_d$ and $P_{fa}$ are modeled exactly; otherwise $R_t(q)$ is no longer a
martingale and Theorem~\ref{thm:safe} can fail. \emph{Target.} Results assume a single, static target and independent reports over time.
\end{remark}

\section{SIMULATIONS}
\label{sec:sim}
\begin{table}[!t]
\renewcommand{\arraystretch}{1.1}
\setlength{\tabcolsep}{2pt}
\caption{Simulation parameters}
\label{tab:params}
\centering
\footnotesize
\begin{tabular}{llll}
\hline
Map $X$ & $[0,5]^2$, $200\times200$ cells & Agents $n$ & 10\\
$P_{fa}$, $\delta$ & $10^{-6}$, 0.01 & $P_e=\delta$ at & 0.15 (one agent)\\
$r_c$, $d_c$ & 0.1, 0.05 & $r$ in $L$, $s$ & 0.08, 20\\
$d_{\min}$, $d_{\max}$ & 0.005, 0.006 & $\sigma$, $\epsilon$ & 0.1, $10^{-6}$\\
$\rho$, $w$ & 1.67, $10^{-4}$ & $\tau$, $\sigma_k$ & 1, 0.1\\
Prior floor & 0.1 & $\eta$ & 0.01\\
\hline
\end{tabular}
\end{table}
\subsection{Simulation Setup}
All experiments were run in MATLAB, using the same world scale, sensor model, and belief update. 
The prior $\phi(\cdot, 0)$ is a mixture of anisotropic Gaussian hot spots of different sizes and orientations plus a $10\%$ uniform floor, ensuring every cell has positive prior mass. The $n = 10$ agents start in a connected formation near the bottom left corner. 
We adopt the sensing cost $f_{local}(\|q-x_i(t)\|)=\tfrac12\|q-x_i(t)\|^2+\epsilon$, standard in coverage control \cite{schwager2006distributed,schwager2009decentralized}, where $\epsilon>0$ enforces $f_{local}>0$. We assign $\epsilon = 10^{-6}$ and the pull region uses $r_s = 0.15$, $\gamma = 1.1$, and $\varepsilon_g = 10^{-12}$. Agents communicate within $r_c = 0.1$ (inside the Laplacian we use $0.08$ as a safety margin), with $\zeta = 1$. A cell is confirmed when its belief reaches $1-\eta$, with $\eta = 0.01$. The remaining parameters can be found in Table \ref{tab:params}. We run three experiments. \emph{(i) Single run:} a fixed target, to show the belief evolution, trajectories, and connectivity (Fig. \ref{fig:search_results}). \emph{(ii) Lawnmower comparison: }$30$ paired trials in which PPALA and a connected lawnmower sweep search for the same target, drawn from a prior concentrated around the top third of the map, with identical sensing, false alarms, and belief update. The lawnmower agents move in a line (spacing $0.08$) at full speed $d_{max}$ along boustrophedon lanes that sense every cell in each sweep, ignoring the belief. \emph{(iii) Theory check:} we compare measured evidence per look, rejection rates, and wrong-confirmation frequencies with Theorems \ref{thm:conv}--\ref{thm:rate} and Corollary \ref{cor:tau}, over $100$ PPALA trials at $P_{fa} = 10^{-6}$ and $400$ at $P_{fa} = 10^{-2}$, plus $50$ lawnmower trials at each rate. At $P_{fa} = 10^{-2}$, $\gamma_s$ is rescaled so the sense radius stays the same. Trials end at confirmation or at a step limit ($20,000$ steps in (ii), $15,000$ steps in (iii)).  

\begin{figure}[!t]
\centering
\includegraphics[width=0.32\columnwidth]{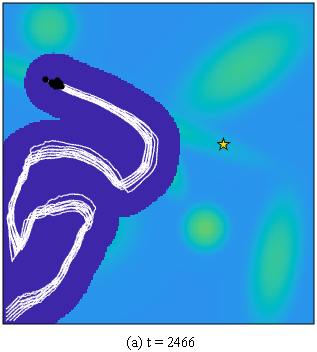}\hfill
\includegraphics[width=0.32\columnwidth]{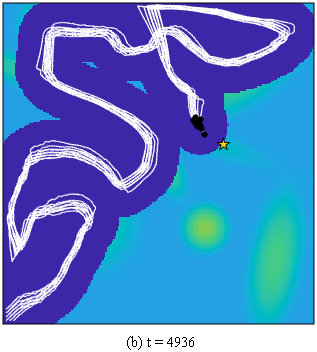}\hfill
\includegraphics[width=0.32\columnwidth]{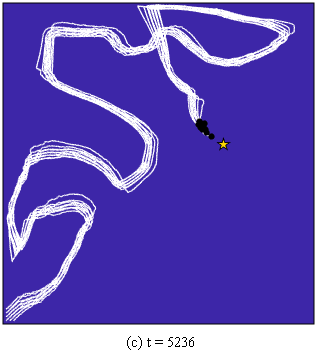}
\caption{Belief $\phi$ (color) and robot trajectories (white) at three times of a single run; the star marks the target.}
\label{fig:search_results}
\end{figure}
\begin{figure}[!t]
\centering
\includegraphics[width=0.8\columnwidth]{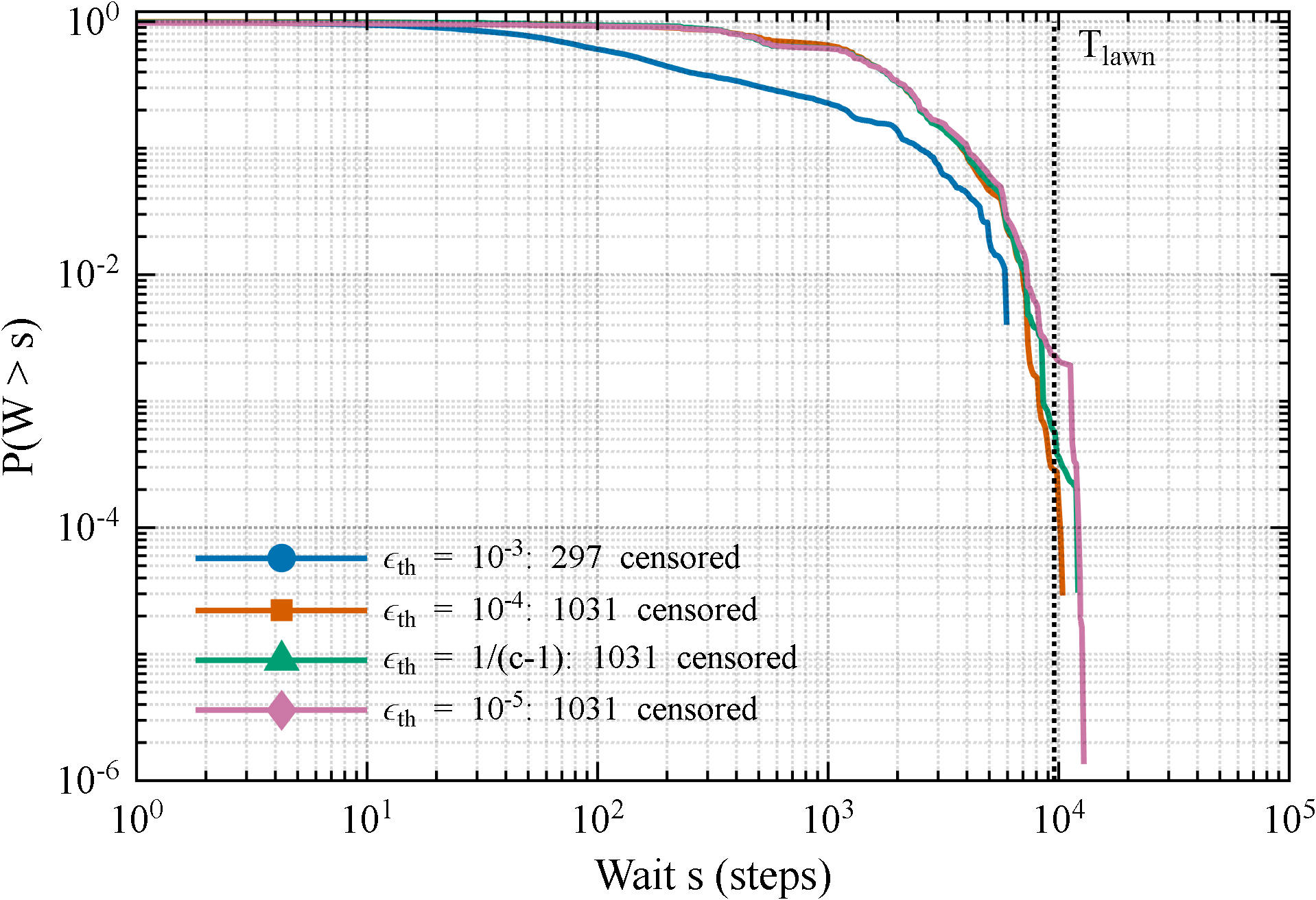}
\caption{Empirical check of Assumption~\ref{as:AGI} for PPALA ($40$ trials, target drawn from prior). An episode starts when a cell $q\neq p$ first has $\phi(q,t)\ge\epsilon_{th}$ and ends when $q$ or $p$ is sensed. $W$ is its wait. Open episodes when a trial ends are censored. Dotted: lawnmower's coverage window $T_{lawn}$, which satisfies the assumption for every $\epsilon_{th}$.}
\label{fig:agi}
\end{figure}
 \begin{figure}[!t]
\centering
\includegraphics[width=\columnwidth]{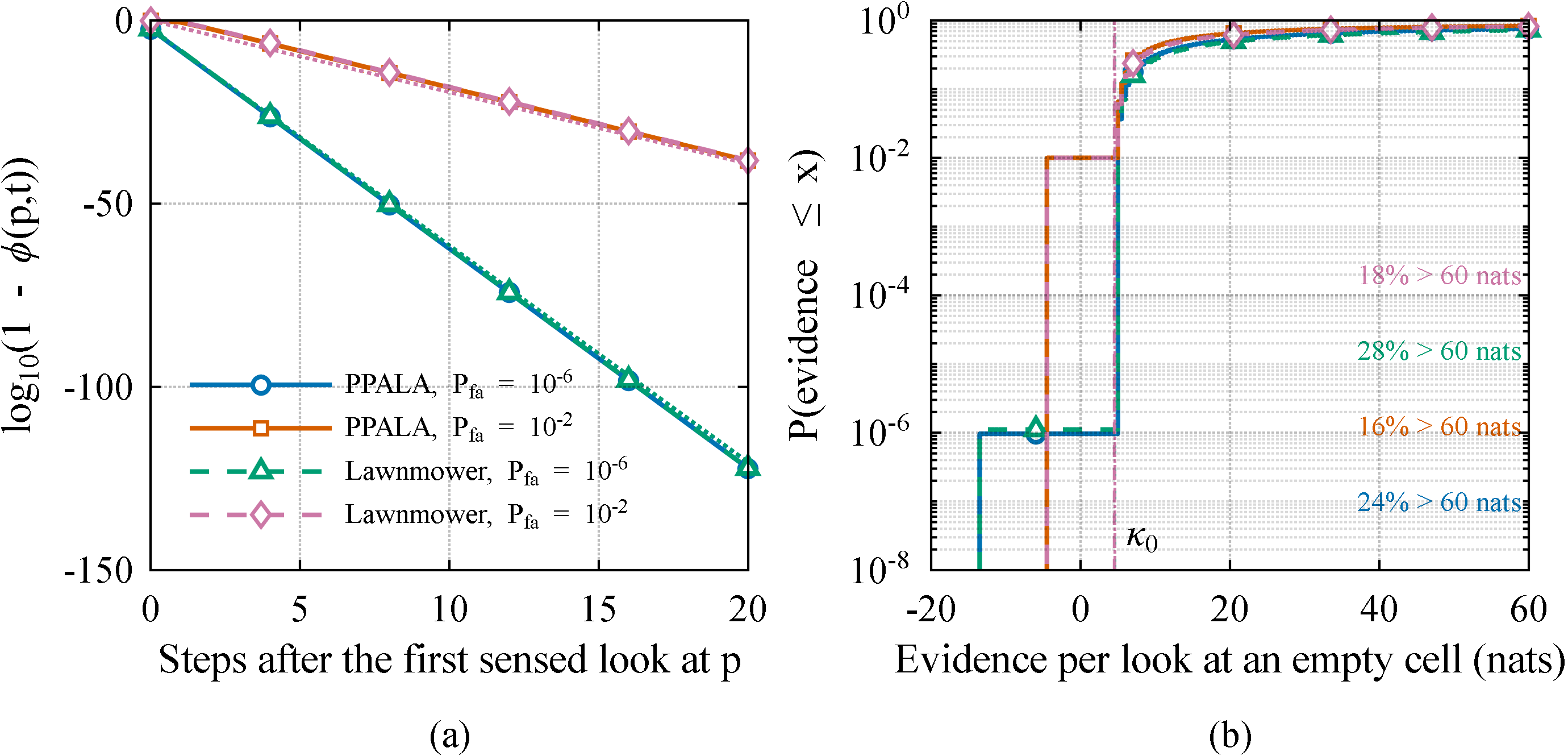}
\caption{(a) Residual belief after first sensed look at target (multi-trial median); dotted: slope $-\kappa_1/\ln 10$ per look. (b) Empirical CDF of the evidence of one look at an empty cell. Vertical line: $\kappa_0$. Solid: PPALA; dashed: lawnmower.}
\label{fig:convergence}
\end{figure}
\begin{figure}[!t]
\centering
\includegraphics[width=0.8\columnwidth]{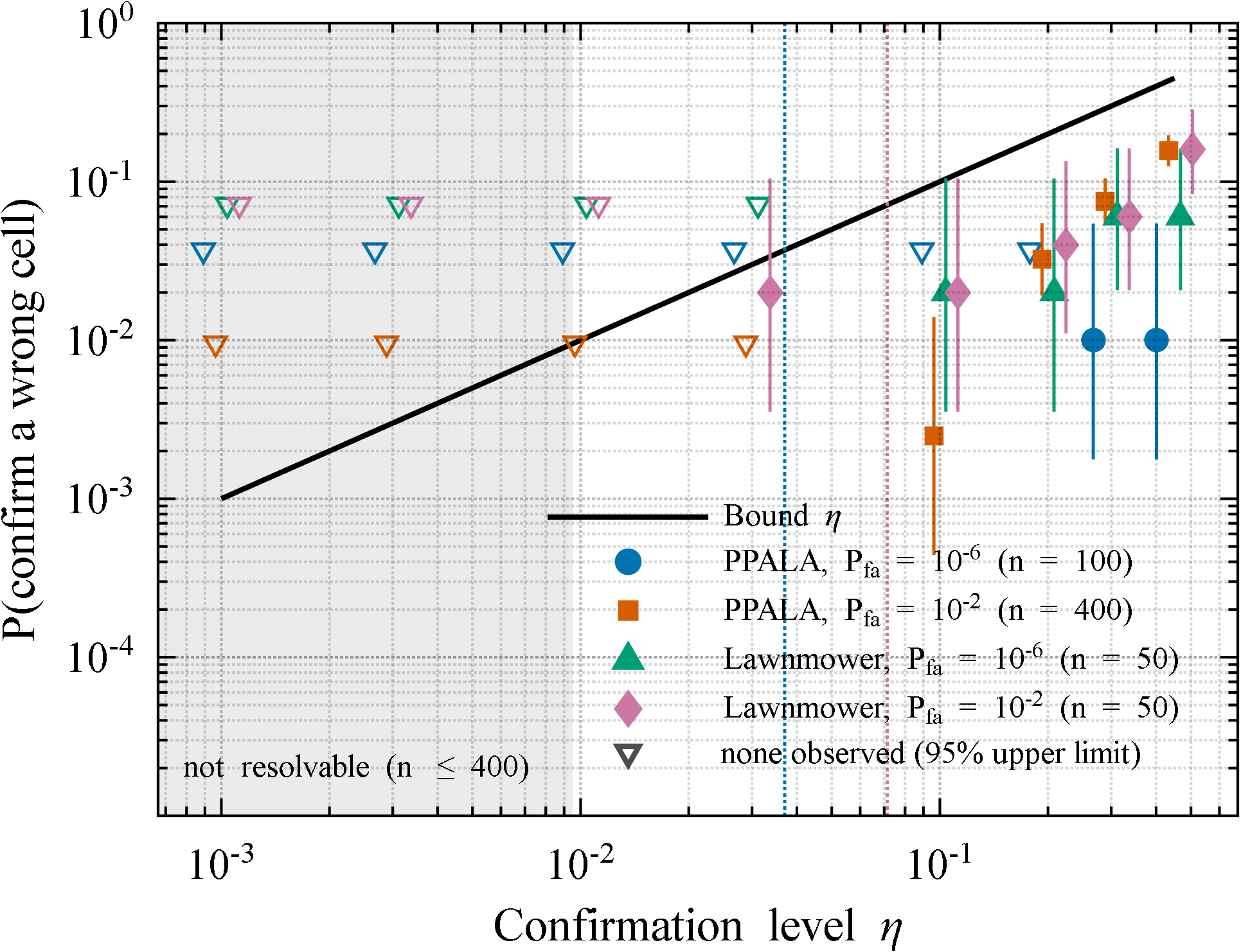}
\caption{Frequency of confirming a wrong cell vs.\ confirmation level, showing bound $\eta$ of Corollary~\ref{cor:tau}(iii) (target drawn from the prior). Filled markers: observed frequency with 95\% Wilson intervals; open triangles: no wrong confirmation observed (95\% upper limit). Shaded region and dotted lines: not enough trials to resolve the bound.}
\label{fig:wrong}
\end{figure}
\subsection{Results and Discussion}
Fig. \ref{fig:search_results} shows a single run. Starting from a connected formation at the corner, the team sweeps the map as one network, and the exact Bayes update clears the belief along its sensed path (dark region). The belief of unvisited cells, including the target's, grows slowly as the cleared mass is renormalized over them. Once the team senses the target, the belief at the target $p$ rises to $1$ within a few looks, and the remaining belief vanishes everywhere else (rightmost plot in Fig. \ref{fig:search_results}).  
Fig. \ref{fig:agi} tests whether PPALA satisfies Assumption \ref{as:AGI}, the condition under which Theorem \ref{thm:conv}(ii) holds. Higher belief cells are visited sooner. The probability of seeing waits greater than the lawnmower's deterministic window drops off very abruptly. PPALA does not enforce revisits, so Assumption \ref{as:AGI} is a condition on the planner that our runs satisfy in most, but not all trials. This is consistent with treating the planner as a black box in Section V.
Next we compared PPALA with a connected lawnmower sweep over $30$ paired trials. Both planners confirmed the correct cell in every trial at $\eta = 0.01$. PPALA was faster in all $30$, with  a median of $1659$ steps with interquartile range (IQR) $1327$--$1888$ versus $3521$ with IQR $3356$--$3881$ for the lawnmower strategy. The advantage lies in the prior. PPALA moves to high-belief regions first, while the lawnmower sweeps in a fixed order regardless of the belief.
Fig. \ref{fig:convergence} checks the rates of Section V. Once the target is sensed, $\phi(p,t)\to 1$ (Theorem \ref{thm:conv}(i)) for both planners, and $ 1 - \phi(p,t)$ decays at the rate set by the $\kappa_1N_t(p)$ term of Theorem \ref{thm:rate} (at about $5.9$ decades per look at $P_{fa} = 10^{-6}$ and $2.0$ at $P_{fa} = 10^{-2}$, Fig. \ref{fig:convergence}a). Fig. \ref{fig:convergence}b illustrates Lemma \ref{lem:info}. Every miss at a sensed empty cell yields at least $\log((1-P_{fa})/\delta)\approx \kappa_0$ nats (the only looks below this are false alarms, occurring at the modeled rates for $P_{fa}$).
Fig. \ref{fig:wrong} tests Corollary \ref{cor:tau}(iii). In every condition and at every $\eta$, the observed frequency of confirming a wrong cell lies below the bound. At the operating level $\eta = 0.01$, PPALA made no wrong confirmations in $400$ trials even at $P_{fa} = 10^{-2}$. Wrong confirmations appear only for $\eta\geq 0.03$, and for PPALA at $P_{fa} = 10^{-6}$ only for $\eta \geq 0.3$. Not enough trials were run to conclude about $\eta < 0.01$.

\section{CONCLUSION}
We presented a coordination framework for multi-agent search that fuses sensor uncertainties by inverse-variance weighting and keeps the team connected through a differentiable constraint maintained by a primal-dual planner. For any planner that does not anticipate future measurements, we bounded the probability of ever confirming a wrong location and showed that false locations are rejected exponentially fast in the number of looks. Under AGI, the belief provably concentrates on the target. In simulation, our planner maintained connectivity, satisfied AGI in most trials, and located the target faster than a lawnmower sweep when the prior was informative. The measured residual belief, evidence per look, and false-confirmation rates all agreed with our bounds.

{\footnotesize
\bibliographystyle{IEEEtran}       %
\bibliography{arXiv/ref}}

\end{document}